\newtheorem{theorem}{Theorem}
\newtheorem{lemma}{Lemma}
\newtheorem{remark}{Remark}
\newenvironment{proof}[1][Proof: ]{\textbf{#1}\it }
\begin{document}
\title{Functional principal component analysis for functional data with detection limits 
}
\author{Haiyan Liu$^{1}$
and Jeanine Houwing-Duistermaat$^{1,2}$\\
$^1$Department of Statistics, University of Leeds, United Kingdom\\
$^2$Department of Mathematics,\\ Radboud University, Nijmegen, The Netherlands\\
}

\maketitle

\begin{abstract}
When measurements fall below or above a detection threshold, the resulting data are missing not at random (MNAR), posing challenges for statistical analysis. For example, in longitudinal biomarker studies, observations may be subject to detection limits. Functional principal component analysis (FPCA) is commonly used method for dimension reduction of dense and sparse data measured along a continuum, but standard approaches typically ignore MNAR mechanisms by imputing detection limit values, leading to biased estimates of principal components and scores.

Building on recent work by Liu and Houwing-Duistermaat (2022, 2023), who proposed estimators for the mean and covariance functions under detection limits, we extend FPCA to accommodate functional data affected by such limits. We derive the asymptotic properties of the resulting estimators and assess their performance through simulations, comparing them to standard methods. Finally, we illustrate our approach using longitudinal biomarker data subject to detection limits.

Our method yields more accurate estimates of functional principal components and scores, enhancing the reliability of functional data analysis in the presence of detection limits.

\end{abstract}

Keywords: functional data analysis, informative missing, detection limit, functional principal component analysis

\section{Introduction}
Technological advances have led to an increase in datasets with temporal observations, both dense and sparse, prompting the development of functional data analysis (FDA) methods. See for example Ramsay and Silverman (2005), Ferraty and Vieu (2006), Horv\'ath and Kokoszka (2012), Kokoszka and Reimherr (2017) for dense data and Yao et al. (2005), Peng and Paul (2009), Li and Hsing (2010), Wang et al. (2016), Zhang and Wang (2016) for sparse data. While existing approaches assume fully observed data, many real-world datasets, such as longitudinal biomarker measurements, are subject to detection limits (DL), where values below (or sometimes above) a threshold cannot be accurately measured. This results in data that are missing not at random, often appearing as zeros or fixed threshold values. Although recent work has addressed estimation of the mean and covariance functions under such constraints (Shi et al.\ 2021, Liu and Houwing-Duistermaat 2022, 2023), methods for estimating functional principal component scores and eigenfunctions remain underdeveloped. This paper addresses that gap by focusing on FPCA for data with lower DLs, where values below a known threshold are unobservable and replaced by the DL.

FPCA approaches are often applied in studies aiming to 
investigate the effect of a covariate measures along a continuum on a response variable. For example modelling the relationship between fMRI and risk taking behaviour or the effect of a longitudinal measured biomarker on disease progression. The Karhunen-Loeve expansion and Mercer's theorem enables writing the functional predictor in an infinite sum of FPC scores and eigenfunctions. 
Dimension reduction is achieved by truncating the infinite expansion to a finite number of terms, selected based on the proportion of variance they explain. These finite sums can then be included in a regression model for an outcome variable. 

For estimation of the mean function,  Shi et al. (2021) proposed a global method  when there are detection limits. However, since  observations close to the target point $t$ contain more information about the mean function at $t$ than  observations   far away from $t$, local methods as proposed by Liu and Houwing-Duistermaat (2022, 2023) might be more appropriate. 
The estimators for the mean and the covariance function proposed by Liu and Houwing-Duistermaat (2022, 2023) are based on approximations of the likelihood function, while Murphy et al (2024) used a full likelihood approach. For data subject to DL, the likelihood function is a product of probability density functions for the observed values and of probability distribution functions for the observations subject to DL, since for the latter observations we know that the unobserved value is below a known threshold. 
To estimate the mean and covariance functions around observed time points, Liu and Houwing-Duistermaat (2022, 2023) proposed to use the local polynomial kernel method (Fan and Gijbels 1995, Fan and Gijbels 2018, Beran and Liu 2014, Beran and Liu 2016). To reduce the computation time, Liu and Houwing-Duistermaat (2022) developed constant approximations for the probability distribution functions in the likelihood function. The constant approximation is computationally fast especially for dense data while it only performs slightly less than the exact and linear approximation method (Liu and Houwing-Duistermaat, 2022).  Therefore, Liu and Houwing-Duistermaat (2023) used the constant approximations for the covariance estimator and we will use the constant approximation in this paper as well.

Further, two weighting schemes for subjects have been considered (Zhang and Wang 2016, Liu and Houwing-Duistermaat 2022, 2023), namely the SUBJ scheme which assigns the same weight to each subject and the OBS scheme which assigns the same weight to each observation. The latter scheme will assign more weight to subjects with more observations. 
 Liu and Houwing-Duistermaat (2022, 2023) showed that, OBS scheme is similar to SUBJ scheme in estimating mean and covariace under the criteria of MISE for the dense case, while OBS scheme is better than SUBJ scheme in the sparse case. 
Therefore, in this paper, we only consider OBS scheme in the estimation of functional principal components and their corresponding scores.

In this paper, we propose a novel estimator for the functional principal components based on local constant approximations. We first obtain estimates of the eigenfunctions by using the estimator of the covariance function developed for data subject to DL (Liu and Houwing-Duistermaat, 2023). In the next step, we estimate the subject specific scores by using a maximum likelihood approach with constant approximations.  We derive the asymptotic behaviour of the estimators of the scores. 
Via simulations we evaluate their performance in a sparse and a dense setting
and compare their performance with the standard method where the detection limit is used for the missing values. We also investigate the asymptotic behaviour of the estimators via simulations. 
To illustrate the proposed method, we apply it to temporal data from a biomarker study. We finish with a conclusion.

\section{Methodology}
\subsection{Functional Principal Component Analysis (FPCA)}
Firstly, we define the model for functional data subject to a detection limit.
Let $\{X(t): t\in I\}$ represent an $L^2$ stochastic process on the interval $I$.
The mean and covariance function of $X(t)$ are denoted as $\mu(t)=E[X(t)]$ and $C(s,t)=E[(X(s)-\mu(s))(X(t)-\mu(t))]$, respectively. 
Then the process $X(t)$ can be written as:
$$X(t)=\mu(t)+U(t)$$
where $U(t)$ represents the stochastic part of $X(t)$ with mean zero, i.e. $E[U(t)]=0$ 
for $t\in I$, and covariance $C(s,t)=E[U(s)U(t)]$ for all $s, t\in I$.
By Karhunen-Loeve expansion and Mercer's theorem, the covariance function $C(s,t)$ can be decomposed into
\begin{align}
\label{eq:Mercer Theorem - theoretical}
C(s,t)=\sum_{l=1}^\infty\lambda_l\psi_l(s)\psi_l(t) 
\end{align}
and the stochastic part $U(t)$ can be decomposed into
$$U(t)=\sum_{l=1}^\infty\xi_{l}\psi_l(t),$$
where $\psi_l(t)$ are the eigenfunctions of the covariance operator associated with $C(s,t)$, and $\lambda_1>\lambda_2>...$ are the the corresponding eigenvalues.
Moreover, the variance of the scores $\xi$ satisfies $var(\xi_l)=\lambda_l$.
Note that the functional principal components $\{\psi_l(t)\}$ (FPCs) form an orthonormal basis for $L^2 (I)$.

Let $X_1(t),...,X_n(t)$ represent $n$ iid copies of $X(t)$ with $t\in I$. Assume observations of $X_1(t),...,X_n(t)$ are made at discrete time points $t_{i1},...,t_{iN_i}$ for $i=1,...,n$, which are perturbed by independent random errors $\epsilon$.  
Specifically, let $Y_{ij}$ denote the random variable for the $j$th time point for subject $i$ with $j=1,...,N_i$ and $i=1,...,n$, with $N_i$ the number of measurements for subject $i$.
The model for $Y_{ij}$ is therefore given by,
\begin{align} \label{eq:model}
Y_{ij}
=X_i(t_{ij})+\epsilon_{ij}
=\mu(t_{ij})+U_i(t_{ij})+\epsilon_{ij}
=\mu(t_{ij})+\sum_{l=1}^\infty\xi_{il}\psi_l(t_{ij})+\epsilon_{ij}
\end{align}
where $\epsilon_{ij}$ represents an independent random measurement error term, distributed according to an exponential family with mean zero and variance $\sigma^2$,
which implies $\epsilon_{ij}$ are independent across both $i$ and $j$.
Moreover, we assume that $\epsilon_{ij}$ is independent of $U_i(t)$ (or equivalently $\xi_{il}$). 
Often, a Gaussian distribution is assumed for both error terms and stochastic parts, i.e. we have $\epsilon_{ij}\sim \mathcal N(0,\sigma^2)$ and $\xi_{il}\sim \mathcal N(0, \lambda_l)$. 

Not all $Y_{ij}$ are observed due to the presence of a DL. 
Let $\delta_{ij}$ denote the missingness indicator, i.e.  $\delta_{ij}=0$ if $Y_{ij}$ is observed, 
and $\delta_{ij}=1$ if $Y_{ij}$ is unobserved.
When $\delta_{ij}=1$,
the unobserved $Y_{ij}$ is assumed to have a value less than or equal to a specific threshold $c_{ij}$. 
For simplicity of notation, 
we assume the threshold is fixed i.e. $c_{ij}=c$ for all $i,j$.
Thus, the observed data consists of 
$$\{(t_{ij}, y_{ij}, \delta_{ij})\},\ i=1,...,n,\ j=1,...,N_i,$$ 
where $y_{ij}$ is missing for $\delta_{ij}=1$.
The presence of a DL complicates the likelihood function, as the contributions of the observations subject to the DL are represented by the cumulative probability distribution function rather than the probability density function. 
This complexity makes the likelihood function difficult to maximise.
To address this, Liu and Houwing-Duistermaat (2022) proposed locally approximating the probability distributions using either a linear function or a constant.
This approach results in computationally efficient estimators for the mean function (see Liu and Houwing-Duistermaat 2022) and covariance function (see Liu and Houwing-Duistermaat 2023).  
Their simulations showed that the local-linear estimator performed only slightly better than the local-constant estimator, but was less computationally efficient. 
Therefore, in this paper we will only consider the local-constant estimator. In the next section we will briefly summarize the estimation procedure for the mean and covariance function. Then we will propose a novel approach for FPCA using data subject to detection limits.

\subsection{Locally kernel weighted log-likelihood estimator for mean  and covariance function}

In this subsection, for simplicity and without loss of generalisability, we assume that $U_i(t)=0$ in formula (\ref{eq:model}).
The loglikelihood function, locally approximated by a constant, is given as follows (see Liu and Houwing-Duistermaat 2022)
\begin{footnotesize}
\begin{align} \label{eq:likelihood - local constant - approximation}
\begin{split}
L(\boldsymbol{\beta}; h, t)
=\sum_{i=1}^nw_i\sum_{j=1}^{N_i}&
\left[-0.251\delta_{ij} \left(\frac{c-\beta_0}{\sigma}\right)^2
+0.8194\delta_{ij} \frac{c-\beta_0}{\sigma}
\right.\\
& \left.-(0.5-0.5\delta_{ij}) \left(\frac{y_{ij}-\beta_0}{\sigma}\right)^2
\right]K_h(t_{ij}-t).
\end{split}
\end{align}
\end{footnotesize}
where $K_h(\cdot)=\frac{1}{h}K\left(\frac{\cdot}{h}\right)$ 
and $K(\cdot)$ is a kernel function, 
 and $w_i$ are weights. 
Two types of weights $w_i$ are considered, namely
$$w_{i}^{SUBJ}=\frac{1}{nN_i}$$
and
$$w_{i}^{OBS}=\frac{1}{\sum_{i=1}^nN_i}.$$
The SUBJ weight assigns the same weight to each subject and
the OBS weight which assigns the same weight to each observation. The latter assigns more weight to subjects with more observations.

Using loglikelihood function (\ref{eq:likelihood - local constant - approximation}), Liu and Houwing-Duistermaat (2022) derived the local constant estimator of the mean function as follows
\begin{align}\label{eq:likelihood - local constant - approximation-estimator}
\hat\mu^{LC}(t)=\hat\beta_0=\frac{R_0}{S_0},
\end{align}
where
$$S_0=\sum_{i=1}^nw_i\sum_{j=1}^{N_i}(1-0.498\delta_{ij})
K_h(t_{ij}-t)$$
and
$$R_0=\sum_{i=1}^nw_i\sum_{j=1}^{N_i}\left[-0.8194\delta_{ij}\sigma
+0.502\delta_{ij}c+(1-\delta_{ij})y_{ij}
\right] 
K_h(t_{ij}-t).$$

Note that Liu and Houwing-Duistermaat (2022)  only derived the asymptotic distribution for the local linear estimator (see Theorem 1 of their paper). The asymptotic distribution of $\hat\mu^{LC}(t)$ given in (\ref{eq:likelihood - local constant - approximation-estimator}) can be derived similarly.

For the rest of this section, we assume $\mu(t)=0$ and $\epsilon \sim N(0, \sigma^2)$ 
(for simplicity of notations). 
We propose the following fast local constant kernel weighted estimator of $C(s,t)$:
\begin{align} \label{eq:likelihood - C - approximation}
\tilde C(s,t)=\frac{R_{00}}{S_{00}}
\end{align}
where
$$R_{00}=\sum_{i=1}^n v_i\sum_{1\leq j\ne l\leq N_i}K_h(t_{ij}-s)K_h(t_{il}-t)C_{ijl}$$
and
$$S_{00}=\sum_{i=1}^n v_i\sum_{1\leq j\ne l\leq N_i}K_h(t_{ij}-s)K_h(t_{il}-t)D_{ijl}$$
with 
$$C_{ijl}=[-0.8194\delta_{ij}\sigma+0.502\delta_{ij}c+(1-\delta_{ij})y_{ij}]\cdot
[-0.8194\delta_{il}\sigma+0.502\delta_{il}c+(1-\delta_{il})y_{il}]$$
and
$$D_{ijl}=(1-0.498\delta_{ij})(1-0.498\delta_{il})$$
and $v_i$
$$v^{OBS}_i=\frac{1}{\sum N_i(N_i-1)}\quad \text{or}\quad  v^{SUBJ}_i=\frac{1}{nN_i(N_i-1)}.$$

Liu and Houwing-Duistermaat (2023) derived the asymptotic distribution of the local constant estimator given in 
equation (\ref{eq:likelihood - C - approximation}) and we summarise the result as the following lemma (and related notations can be found in their paper):
\begin{lemma} \label{lemma - covariance}
Under certain assumptions of kernel function, sampling density and continuity of covariance, 
for a fixed interior point $(s,t)\in I\times I$, we have,
$$(\Gamma_{n,N_i})^{-1/2}\left[\tilde C(s,t)-C(s,t)-B(s,t)\sigma^2
-\frac{h^2}{2}\sigma_K^2D(s,t)+o(h^2)\right]\to \mathcal{N}(0,1).$$
\end{lemma}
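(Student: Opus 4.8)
The plan is to treat $\tilde C(s,t)=R_{00}/S_{00}$ as a ratio of kernel-weighted pairwise sums and to reduce the problem to a bias analysis of the numerator together with a central limit theorem for a double sum. First I would linearise the ratio: writing
\begin{align*}
\tilde C(s,t)-C(s,t)=\frac{R_{00}-C(s,t)\,S_{00}}{S_{00}}=\frac{T_{00}}{S_{00}},
\end{align*}
where $T_{00}:=R_{00}-C(s,t)\,S_{00}$, the analysis splits into (i) showing that the denominator $S_{00}$ converges in probability to a deterministic limit $s_{00}$ determined by the sampling density at $s$ and $t$ and by $E[D_{ijl}]$, and (ii) establishing a limit distribution for the centred numerator $T_{00}$. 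Slutsky's theorem then recombines the two pieces, identifying the normalising quantity $\Gamma_{n,N_i}$ with the asymptotic variance of $T_{00}/s_{00}$.

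The second step is the bias computation for $E[T_{00}]$, which isolates the two correction terms in the statement. Using the independence of $\epsilon$ across $i,j$ and the Gaussian model for the detection-limit contributions, I would evaluate the conditional expectations of the factors $C_{ijl}$ and $D_{ijl}$. The point is that the constant approximation to the Gaussian distribution function embedded in $C_{ijl},D_{ijl}$ (with the numerical constants $0.8194$, $0.502$, $0.498$) reproduces $C(s,t)$ only up to an error proportional to $\sigma^2$; collecting this residual yields the term $B(s,t)\sigma^2$. Separately, a second-order Taylor expansion of $C(s',t')$ about $(s,t)$, combined with the kernel moment identities $\int K=1$, $\int uK(u)\,du=0$ and $\int u^2K(u)\,du=\sigma_K^2$, produces the smoothing bias $\tfrac{h^2}{2}\sigma_K^2 D(s,t)$ with $D$ built from the second derivatives of $C$, the remainder being $o(h^2)$.

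The third step is the variance calculation and the central limit theorem for $T_{00}$. Because $T_{00}$ is a sum over subjects $i$ of within-subject sums over ordered pairs $j\ne l$, the summands are independent across $i$ but dependent within a subject whenever two pairs share an index; the leading variance therefore separates into a ``diagonal'' contribution from pairs sharing an index and an ``off-diagonal'' contribution, whose relative magnitudes differ between the dense and the sparse regimes. I would handle this with a H\'ajek-type projection (equivalently, by treating $T_{00}$ as a triangular array indexed by subjects) and verify a Lindeberg/Lyapunov condition to obtain asymptotic normality with variance matched to $\Gamma_{n,N_i}$.

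The main obstacle I anticipate is this last step: controlling the within-subject dependence uniformly across the dense and sparse sampling schemes and showing that the two variance contributions assemble into a single clean normalising factor $\Gamma_{n,N_i}$. A secondary difficulty is the bookkeeping in the bias step, namely keeping the approximation-induced $\sigma^2$ term cleanly separated from the $O(h^2)$ smoothing term so that both appear exactly as in the statement. Since the analogous result for the mean function and for the local-linear covariance estimator is already available in Liu and Houwing-Duistermaat (2022, 2023), I would model the projection and Lindeberg arguments on those proofs, adapting the constants to the local-constant approximation.
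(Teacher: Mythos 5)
First, a point of comparison: the paper does not prove this lemma at all. It is imported verbatim from Liu and Houwing-Duistermaat (2023), where the asymptotic distribution of the local-constant covariance estimator is derived; the present paper only restates it and even defers the definitions of $B(s,t)$, $D(s,t)$ and $\Gamma_{n,N_i}$ to that reference. So there is no in-paper argument to measure your proposal against, and the relevant benchmark is the proof in the cited work.

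Judged on its own terms, your architecture is the standard and correct one for a ratio-type local kernel estimator, and it matches the strategy used for the analogous mean-function result: linearise $\tilde C - C = (R_{00}-C\,S_{00})/S_{00}$, show $S_{00}\to s_{00}$ in probability, compute the bias of the centred numerator by (a) taking conditional expectations of $C_{ijl}$ and $D_{ijl}$ under the Gaussian model to extract the approximation-induced $B(s,t)\sigma^2$ term and (b) Taylor-expanding $C$ with the kernel moment identities to get $\tfrac{h^2}{2}\sigma_K^2 D(s,t)$, then prove a CLT for the double sum and finish with Slutsky. However, as written the proposal is a plan rather than a proof, and the gap is precisely where the content of the lemma lives: you never compute $B(s,t)$, $D(s,t)$ or $\Gamma_{n,N_i}$, and you explicitly flag the variance decomposition of the within-subject pair sum (the diagonal versus off-diagonal contributions, which behave differently in the sparse and dense regimes) and the Lindeberg verification as unresolved obstacles. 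There is also a subtlety your bias step glosses over: the missingness indicator $\delta_{ij}$ is determined by $Y_{ij}$ itself, so $E[C_{ijl}]$ and $E[D_{ijl}]$ involve truncated Gaussian moments, not a free conditioning on $\delta$; this is exactly where the constants $0.8194$, $0.502$, $0.498$ interact with the model to produce $B(s,t)$, and it must be carried out explicitly for the stated centring to be correct. To turn the proposal into a proof you would need to execute these computations, or else do what the paper does and cite Liu and Houwing-Duistermaat (2023) directly.
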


\subsection{Eigenvalues and functional principal components} 
The eigenvalues $\lambda_l$ and eigenfunctions $\psi_l(t)$ are
estimated from 
\begin{align}
\label{eq:adjusted - C - approximation}
\mathbf {\hat C}(\hat\psi_l(t))=\int\hat C(s,t)\hat\psi_l(s)ds=\hat\lambda_l\hat\psi_l(t)
\end{align}
where $\hat C(s,t)=\tilde C(s,t)-B(s,t)\sigma^2$ is the adjusted estimated covariance estimator derived from formula (\ref{eq:likelihood - C - approximation})
and $\mathbf{\hat C}$ is the corresponding covariance operator induced by $\hat C(s,t)$.

\begin{theorem}
Let $\hat\lambda_l,\ \hat\psi_l$ be defined by formula (\ref{eq:adjusted - C - approximation}), and suppose that 
the assumptions of Lemma \ref{lemma - covariance} hold. 
Moreover, assume that $\lambda_l\geq\lambda_{l+1}$.
For a given threshold $c_\lambda>0$, let $J\in \mathbb N$ denote the (finite) set of indices such that $\lambda_l>c_\lambda$ and
$\lambda_l>\lambda_{l+1},\ (l\in J)$. 
Then for each $l\in J$, we have
$$|\hat\lambda_l-\lambda_l|=O_p\left(T_{n,N_i}^{-1/2}\right)$$
$$||\hat\psi_l-\psi_l||=O_p\left(T_{n,N_i}^{-1/2}\right)$$
and 
$$\sup_{t\in I}|\hat\psi_l(t)-\psi_l(t)|=O_p\left(T_{n,N_i}^{-1/2}\right).$$
\end{theorem}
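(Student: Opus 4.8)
The plan is to treat the eigenvalues and eigenfunctions as Lipschitz-continuous functionals of the covariance operator, so that the convergence rate of $\hat C(s,t)$ transfers directly to $\hat\lambda_l$ and $\hat\psi_l$ through classical perturbation theory for compact self-adjoint operators (Bosq 2000, Horv\'ath and Kokoszka 2012). The starting point is Lemma \ref{lemma - covariance}. Since the adjusted estimator in (\ref{eq:adjusted - C - approximation}) is defined as $\hat C(s,t)=\tilde C(s,t)-B(s,t)\sigma^2$, the leading bias term is removed and Lemma \ref{lemma - covariance} leaves a remainder of stochastic order $\Gamma_{n,N_i}^{1/2}$ together with the $O(h^2)$ bandwidth bias. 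First I would strengthen this pointwise statement to a uniform one, establishing
$$\sup_{(s,t)\in I\times I}|\hat C(s,t)-C(s,t)|=O_p\left(T_{n,N_i}^{-1/2}\right),$$
where $T_{n,N_i}^{-1/2}$ collects the uniform stochastic rate and the $h^2$ bias. Because $I$ is compact, this uniform bound immediately yields the same rate for the Hilbert--Schmidt (and hence operator) norm $\|\mathbf{\hat C}-\mathbf C\|$ of the induced integral operators.

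With the operator-norm rate in hand, the eigenvalue claim follows from Weyl's inequality (equivalently the Lidskii/Hoffman--Wielandt bound for self-adjoint compact operators): for every $l$,
$$|\hat\lambda_l-\lambda_l|\le\|\mathbf{\hat C}-\mathbf C\|\le\|\mathbf{\hat C}-\mathbf C\|_{HS}=O_p\left(T_{n,N_i}^{-1/2}\right).$$
For the eigenfunctions I would fix the orientation by choosing the sign of $\hat\psi_l$ so that $\langle\hat\psi_l,\psi_l\rangle\ge0$, since eigenfunctions are identified only up to sign. The index set $J$ is precisely the regime where the eigenvalues are bounded below by $c_\lambda$ and strictly separated, so the spectral gap $\alpha_l=\min(\lambda_{l-1}-\lambda_l,\lambda_l-\lambda_{l+1})$ (with $\alpha_1=\lambda_1-\lambda_2$) is bounded away from zero for $l\in J$. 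The standard perturbation bound (Bosq 2000, Lemma 4.3) then gives
$$\|\hat\psi_l-\psi_l\|\le\frac{2\sqrt2}{\alpha_l}\|\mathbf{\hat C}-\mathbf C\|=O_p\left(T_{n,N_i}^{-1/2}\right),$$
which establishes the $L^2$ rate.

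The remaining, and most delicate, step is upgrading the $L^2$ rate to the supremum rate. Here I would exploit the smoothing built into the integral equation (\ref{eq:adjusted - C - approximation}) rather than appeal to abstract operator theory. Writing $\hat\psi_l(t)=\hat\lambda_l^{-1}\int\hat C(s,t)\hat\psi_l(s)\,ds$ and $\psi_l(t)=\lambda_l^{-1}\int C(s,t)\psi_l(s)\,ds$, I would decompose $\hat\psi_l(t)-\psi_l(t)$ into three parts: one governed by $|\hat\lambda_l-\lambda_l|$, one by $\int(\hat C(s,t)-C(s,t))\psi_l(s)\,ds$, and one by $\int \hat C(s,t)(\hat\psi_l(s)-\psi_l(s))\,ds$. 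Applying Cauchy--Schwarz to each term and taking the supremum over $t\in I$, each piece is bounded by a product of quantities whose rates are already known (the eigenvalue rate, the uniform covariance rate, and the $L^2$ eigenfunction rate), all of order $O_p(T_{n,N_i}^{-1/2})$.

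The main obstacle is the very first step: lifting Lemma \ref{lemma - covariance} from a pointwise central limit theorem to a genuinely uniform rate over $(s,t)\in I\times I$. This requires controlling the kernel-weighted sums in (\ref{eq:likelihood - C - approximation}) uniformly, for instance via a chaining or bracketing argument combined with the continuity assumptions on $C$ and on the sampling density, and it is where the technical weight of the proof lies. Once uniformity is secured, the strict spectral separation guaranteed by the definition of $J$ makes the transfer of rates to $\hat\lambda_l$ and $\hat\psi_l$ routine.
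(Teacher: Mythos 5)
Your proposal is correct and follows essentially the same route as the paper, which does not write out a proof but instead defers to Theorem 1, Lemma 1 and Theorem 2 of Beran and Liu (2014) --- precisely the chain you describe: uniform convergence of the covariance estimator, transfer to the Hilbert--Schmidt/operator norm, Weyl's inequality and the Bosq-type spectral-gap bound for $\hat\lambda_l$ and $\|\hat\psi_l-\psi_l\|$, and the integral-equation decomposition for the supremum rate. You also correctly locate the real technical burden (upgrading the pointwise statement of Lemma \ref{lemma - covariance} to a uniform rate), which is exactly the part the paper outsources to the cited reference.
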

This result can be obtained using a similar approach as in the proof of Theorem 1, Lemma 1 and Theorem 2 in Beran and Liu (2014).

\subsection{Functional principal component scores}
We assume in this section that the mean function $\mu(t)=0$ 
and the functional principal components $\psi_l(t)$ are known.
To estimate the first $L$ scores $\xi_{i1}, ..., \xi_{iL}$ for each individual $i=1, ..., n$, we propose maximum likelihood estimation. Specifically,
the likelihood function for the $i$th person is given by
\begin{align}
L_i(\xi_{i1}, ..., \xi_{iL})
=\prod_{j=1}^{N_i}\left\{
\phi\left(y_{ij}; \sum_{l=1}^L\xi_{il}\psi_l(t_{ij}), \sigma^2\right)^{1-\delta_{ij}}
\Phi\left(c; \sum_{l=1}^L\xi_{il}\psi_l(t_{ij}), \sigma^2\right)^{\delta_{ij}}\right\}
\end{align}
Using the approximation technique as that in Liu and Duistermaat (2022), $\log(\Phi(x)), x\in[-1, 2]$ is approximated by
$$\log(\Phi(x))\approx-0.7127+0.8194x-0.251x^2, x\in[-1,2].$$
Therefore the loglikelihood function can be written as (ignoring a constant term)
\begin{align*}
\begin{split}
l_i(\xi_{i1}, ..., \xi_{iL})
=\sum_{j=1}^{N_i}&
\left[-0.251\delta_{ij} \left(\frac{c-\sum_{l=1}^L\xi_{il}\psi_l(t_{ij})}{\sigma}\right)^2
\right.\\
& \left.
+0.8194\delta_{ij} \frac{c-\sum_{l=1}^L\xi_{il}\psi_l(t_{ij})}{\sigma}
\right.\\
& \left.-(0.5-0.5\delta_{ij}) \left(\frac{y_{ij}-\sum_{l=1}^L\xi_{il}\psi_l(t_{ij})}{\sigma}\right)^2
\right].
\end{split}
\end{align*}
Taking the first derivative of the loglikelihood function with respect to $\xi_{il}, l=1,...,L$ and set them to zero,
\begin{align*}
\begin{split}
\frac{\partial l_i}{\partial\xi_{il}}
=-\sigma^{-2}\sum_{j=1}^{N_i}&
\left[-0.502\delta_{ij} \left(c-\sum_{k=1}^L\xi_{ik}\psi_k(t_{ij})\right)\psi_l(t_{ij})
\right.\\
& \left.+0.8194\delta_{ij}\psi_l(t_{ij})\sigma
\right.\\
& \left.-(1-\delta_{ij}) \left(y_{ij}-\sum_{k=1}^L\xi_{ik}\psi_k(t_{ij})\right)\psi_l(t_{ij})
\right]=0
\end{split}
\end{align*}
which is equivalent to 
\begin{align*}
0=
&\sum_{j=1}^{N_i}(1-\delta_{ij})y_{ij}\psi_l(t_{ij})\\
&+(0.502c-0.8194\sigma)\sum_{j=1}^{N_i}\delta_{ij}\psi_l(t_{ij})\\
&-\sum_{k=1}^L\xi_{ik}\sum_{j=1}^{N_i}(1-0.498\delta_{ij})\psi_k(t_{ij})\psi_l(t_{ij})\\
=:&R_{il}+S_{il}-\sum_{k=1}^L\xi_{ik}T_{ikl}
\end{align*}
with $R_{il}=\sum_{j=1}^{N_i}(1-\delta_{ij})y_{ij}\psi_l(t_{ij})$,
$S_{il}=(0.502c-0.8194\sigma)\sum_{j=1}^{N_i}\delta_{ij}\psi_l(t_{ij})$ 
and
$T_{ikl}=\sum_{j=1}^{N_i}(1-0.498\delta_{ij})\psi_k(t_{ij})\psi_l(t_{ij})$.
These $L$ equations can be written as the following matrix form:
\begin{align*}
\begin{bmatrix}
    T_{i11}       & T_{i21}  & \dots & T_{iL1} \\
    T_{i12}       & T_{i22}  & \dots & T_{iL2} \\
    \vdots & \vdots & \vdots & \vdots \\
    T_{i1L}       & T_{i2L}  & \dots & T_{iLL} \\
\end{bmatrix}
\begin{bmatrix}
    \xi_{i1}       \\
    \xi_{i2}       \\
    \vdots \\
    \xi_{iL}       
\end{bmatrix}
=
\begin{bmatrix}
    R_{i1}+S_{i1}       \\
    R_{i2}+S_{i2}       \\
    \vdots \\
    R_{iL}+S_{iL}       
\end{bmatrix}.
\end{align*}
Therefore, the estimate of the scores can be obtained as
\begin{align*}
\begin{bmatrix}
    \hat\xi_{i1}       \\
    \hat\xi_{i2}       \\
    \vdots \\
    \hat\xi_{iL}       
\end{bmatrix}
=
\begin{bmatrix}
    T_{i11}       & T_{i21}  & \dots & T_{iL1} \\
    T_{i12}       & T_{i22}  & \dots & T_{iL2} \\
    \vdots & \vdots & \vdots & \vdots \\
    T_{i1L}       & T_{i2L}  & \dots & T_{iLL} \\
\end{bmatrix}^{-1}
\begin{bmatrix}
    R_{i1}+S_{i1}       \\
    R_{i2}+S_{i2}       \\
    \vdots \\
    R_{iL}+S_{iL}       
\end{bmatrix}.
\end{align*}
By writing
\begin{align*}
\begin{bmatrix}
    T_{i11}       & T_{i21}  & \dots & T_{iL1} \\
    T_{i12}       & T_{i22}  & \dots & T_{iL2} \\
    \vdots & \vdots & \vdots & \vdots \\
    T_{i1L}       & T_{i2L}  & \dots & T_{iLL} \\
\end{bmatrix}^{-1}
=:
\begin{bmatrix}
    Q_{i11}       & Q_{i21}  & \dots & Q_{iL1} \\
    Q_{i12}       & Q_{i22}  & \dots & Q_{iL2} \\
    \vdots & \vdots & \vdots & \vdots \\
    Q_{i1L}       & Q_{i2L}  & \dots & Q_{iLL} \\
\end{bmatrix},
\end{align*}
Therefore
\begin{align}
\label{eq: estimated scores}
\hat\xi_{il}=\sum_{k=1}^LQ_{ikl}(R_{ik}+S_{ik}), l=1,...,L.
\end{align}
\begin{remark}
Notice that, if we assume $\delta_{ij}$ is the discrete version of
an underlying missing mechanism denoted by a deterministic integrable function $\delta(t)$ and $\epsilon_{ij}$ is the discrete version of
an underlying measurement error mechanism denoted by a deterministic integrable function $\epsilon(t)$, as $N_i\to\infty$, we have
\begin{align*}
\frac{1}{N_i}R_{il}
&=\frac{1}{N_i}\sum_{j=1}^{N_i}(1-\delta_{ij})y_{ij}\psi_l(t_{ij})\\
&=\frac{1}{N_i}\sum_{j=1}^{N_i}(1-\delta_{ij}))
\left(\sum_{k=1}^L\xi_{ik}\psi_{k}(t_{ij})+\epsilon_{ij}\right)\psi_l(t_{ij})\\
&\to\sum_{k=1}^L\xi_{ik}\int(1-\delta(t))\psi_k(t)\psi_l(t)dt
+\int(1-\delta(t))\epsilon(t)\psi_l(t)dt,
\end{align*}
\begin{align*}
\frac{1}{N_i}S_{il}=
\frac{1}{N_i}
(0.502c-0.8194\sigma)\sum_{j=1}^{N_i}\delta_{ij}\psi_l(t_{ij})
\to(0.502c-0.8194\sigma)\int\delta(t)\psi_l(t)dt,
\end{align*}
\begin{align*}
\frac{1}{N_i}T_{ikl}
=\frac{1}{N_i}\sum_{j=1}^{N_i}(1-0.498\delta_{ij})\psi_k(t_{ij})\psi_l(t_{ij})
\to\int (1-0.498\delta(t))\psi_k(t)\psi_l(t)dt,
\end{align*}
which implies,
\begin{align*}
R_{il}=O(N_i), \
S_{il}=O(N_i), \
T_{ikl}=O(N_i),
\end{align*}
and therefore
$$Q_{ikl}=O(N_i^{-1}). $$

\end{remark}

Define
$$\zeta_i=\sqrt{N_i}(\hat\xi_{i1}-A_{i1}, ..., \hat\xi_{iL}-A_{iL})^T,\ i=1, ..., n$$
with
\begin{align}
\label{eq:A}
A_{il}=\lim_{N_i\to\infty}
\sum_{k=1}^LQ_{ikl}\sum_{j=1}^{N_i}(1-\delta_{ij})\sum_{k^\prime=1}^L\xi_{ik^\prime}\psi_{k^\prime}(t_{ij})\psi_k(t_{ij})+\sum_{k=1}^LQ_{ikl}S_{ik}.
\end{align}

\begin{theorem}
Under the Gaussian assumption, for $i=1,...,n$, as $N_i\to\infty$, we have
$$\zeta_i\to\mathcal N(0, V)$$
where
$$V=[v_{ll^\prime}]_{l, l^\prime=1,...,L}$$
with 
$$v_{ll^\prime}
=\lim_{N_i\to\infty}\sum_{k, k^\prime=1}^LQ_{ikl}Q_{ik^\prime l^\prime}
N_i\sum_{j=1}^{N_i}(1-\delta_{ij})^2\psi_k(t_{ij})\psi_{k^\prime}(t_{ij})\sigma^2$$
\end{theorem}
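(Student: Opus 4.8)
The plan is to exploit the linearity of the score estimator \eqref{eq: estimated scores} in the data together with the Gaussian assumption, so that the asymptotic normality follows almost directly, without invoking a central limit theorem: the vector of scores is an affine function of the Gaussian errors, hence exactly Gaussian for every finite $N_i$. The work then reduces to isolating the stochastic part, computing its covariance and passing to the limit, and verifying that the deterministic remainder is negligible after the $\sqrt{N_i}$ scaling.

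First I would substitute the model $y_{ij}=\sum_{k'=1}^L\xi_{ik'}\psi_{k'}(t_{ij})+\epsilon_{ij}$, valid on the observed set $\{\delta_{ij}=0\}$, into $R_{ik}$ and split it as $R_{ik}=R_{ik}^{(1)}+R_{ik}^{(2)}$, where $R_{ik}^{(1)}=\sum_{j}(1-\delta_{ij})\sum_{k'}\xi_{ik'}\psi_{k'}(t_{ij})\psi_k(t_{ij})$ is deterministic given the scores and design, and $R_{ik}^{(2)}=\sum_{j}(1-\delta_{ij})\epsilon_{ij}\psi_k(t_{ij})$ carries all the randomness. Writing $\hat\xi_{il}=\sum_k Q_{ikl}(R_{ik}^{(1)}+S_{ik})+\sum_k Q_{ikl}R_{ik}^{(2)}$, the first group is exactly the sequence whose limit defines $A_{il}$ in \eqref{eq:A}, so
\[
\sqrt{N_i}(\hat\xi_{il}-A_{il})=\underbrace{\sqrt{N_i}\sum_{k=1}^L Q_{ikl}R_{ik}^{(2)}}_{=:W_{il}}+\sqrt{N_i}\left(\sum_{k=1}^L Q_{ikl}(R_{ik}^{(1)}+S_{ik})-A_{il}\right).
\]

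Next I would treat the design points $t_{ij}$ and indicators $\delta_{ij}$ as fixed (equivalently, condition on them, in line with the deterministic-mechanism framework of the Remark), so that $W_{il}=\sqrt{N_i}\sum_j\epsilon_{ij}(1-\delta_{ij})\sum_k Q_{ikl}\psi_k(t_{ij})$ is a fixed linear combination of the independent $\mathcal N(0,\sigma^2)$ variables $\epsilon_{ij}$. Hence $(W_{i1},\dots,W_{iL})$ is exactly multivariate normal with mean zero for every finite $N_i$, and by the Cram\'er--Wold device it suffices to track its covariance. Using $\mathrm{Cov}(\epsilon_{ij},\epsilon_{ij'})=\sigma^2\mathbf{1}\{j=j'\}$, a direct computation gives $\mathrm{Cov}(W_{il},W_{il'})=\sigma^2 N_i\sum_{k,k'}Q_{ikl}Q_{ik'l'}\sum_j(1-\delta_{ij})^2\psi_k(t_{ij})\psi_{k'}(t_{ij})$, whose limit is precisely $v_{ll'}$. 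Invoking the orders $Q_{ikl}=O(N_i^{-1})$ and $T_{ikl}=O(N_i)$ from the Remark, so that $N_iQ_{ikl}$ converges by continuity of matrix inversion, each summand is $O(1)$ and this covariance sequence has a finite limit $V$.

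The main obstacle will be showing $\sqrt{N_i}\big(\sum_k Q_{ikl}(R_{ik}^{(1)}+S_{ik})-A_{il}\big)\to 0$, i.e. that the deterministic part reaches its limit faster than $N_i^{-1/2}$. Here I would write each factor as $N_iQ_{ikl}$ times an average $\tfrac{1}{N_i}\sum_j(\cdots)$ and use the Remark's convergence of these averages to the corresponding integrals; for a sufficiently regular sampling design (e.g.\ equally spaced points with smooth $\psi_k$ and $\delta$) standard Riemann-sum / Euler--Maclaurin bounds yield an $O(N_i^{-1})$ rate, which vanishes after multiplication by $\sqrt{N_i}$. Combined with the exact normality of $W_{il}$, the mean of $\zeta_i$ then tends to $0$ and its covariance to $V$, giving $\zeta_i\to\mathcal N(0,V)$; were the Gaussian assumption dropped, this last step would instead require a Lindeberg--Feller CLT for the triangular array $\{\epsilon_{ij}(1-\delta_{ij})\sum_k Q_{ikl}\psi_k(t_{ij})\}_j$. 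A delicate point worth flagging is that for below-detection points the event $\{\delta_{ij}=0\}$ encodes $\{Y_{ij}>c\}$, which couples $\delta_{ij}$ with $\epsilon_{ij}$; treating the missingness mechanism as deterministic (as in the Remark) sidesteps this truncation effect, and I would state this conditioning explicitly.
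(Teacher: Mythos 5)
Your proposal is correct and follows essentially the same route as the paper's proof: both exploit that, conditionally on the scores, the design and the missingness pattern, $\hat\xi_{il}$ is an affine function of the independent Gaussian errors $\epsilon_{ij}$, so the argument reduces to computing the conditional mean (which converges to $A_{il}$) and the conditional covariance (which converges to $v_{ll'}$), exactly as in the paper. You are in fact somewhat more careful than the published proof on two points it leaves implicit --- namely that $\sqrt{N_i}\bigl(\sum_k Q_{ikl}(R_{ik}^{(1)}+S_{ik})-A_{il}\bigr)$ must vanish (i.e.\ the deterministic part must converge faster than $N_i^{-1/2}$, which requires regularity of the sampling design) and that the truncation event couples $\delta_{ij}$ with $\epsilon_{ij}$ unless the missingness mechanism is treated as deterministic --- but these refinements do not change the underlying approach.
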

\begin{proof}
For the expectation of the $\hat\xi_{il}$, by noticing $E[y_{ij}]=0$,
we have
\begin{align*}
E[\hat\xi_{il}]
&=\sum_{k=1}^LQ_{ikl}(E[R_{ik}]+S_{ik})\\
&=\sum_{k=1}^LQ_{ikl}\sum_{j=1}^{N_i}(1-\delta_{ij})E[y_{ij}]\psi_k(t_{ij})+\sum_{k=1}^LQ_{ikl}S_{ik}\\
&=\sum_{k=1}^LQ_{ikl}\sum_{j=1}^{N_i}(1-\delta_{ij})\sum_{k^\prime=1}^L\xi_{ik^\prime}\psi_{k^\prime}(t_{ij})\psi_k(t_{ij})+\sum_{k=1}^LQ_{ikl}S_{ik}\\to A_{il}.
\end{align*}
For the conditional covariance $var(\hat\xi_{il}|\xi_{il})$, given $\xi_{il}$, $var(\sum_{k^\prime=1}^L\xi_{ik^\prime}\psi_{k^\prime}(t_{ij})|\xi_{il})=0,$
therefore,
\begin{align*}
var(\sqrt{N_i}\hat\xi_{il}|\xi_{il})
&=N_ivar\left(\sum_{k=1}^LQ_{ikl}R_{ik}|\xi_{il}\right)\\
&=N_ivar\left(\sum_{k=1}^LQ_{ikl}\sum_{j=1}^{N_i}(1-\delta_{ij})y_{ij}\psi_k(t_{ij})|\xi_{il}\right)\\
&=N_ivar\left(\sum_{k=1}^LQ_{ikl}\sum_{j=1}^{N_i}(1-\delta_{ij})\left(\sum_{k^\prime=1}^L\xi_{ik^\prime}\psi_{k^\prime}(t_{ij})+\epsilon_{ij}\right)\psi_k(t_{ij})|\xi_{il}\right)\\
&=N_i\sum_{j=1}^{N_i}(1-\delta_{ij})^2
\sum_{k=1}^LQ^2_{ikl}\psi^2_k(t_{ij})
\left(0
+var(\epsilon_{ij})\right)\\
&=N_i\sum_{j=1}^{N_i}(1-\delta_{ij})^2
\sum_{k=1}^LQ^2_{ikl}\psi^2_k(t_{ij})
\left(
\sigma^2\right)\to v_{ll},
\end{align*}
and, for $l\ne l^\prime$,
\begin{align*}
&cov(\sqrt{N_i}\hat\xi_{il}, \sqrt{N_i}\hat\xi_{il^\prime}|\xi_{il}, \xi_{il^\prime})\\
&=N_icov\left(\sum_{k=1}^LQ_{ikl}(R_{ik}+S_{ik}), \sum_{k=1}^LQ_{ik^\prime l^\prime}(R_{ik^\prime}+S_{ik^\prime})|\xi_{il}, \xi_{il^\prime}\right)\\
&=N_icov\left(\sum_{k=1}^LQ_{ikl}R_{ik}, \sum_{k=1}^LQ_{ik^\prime l^\prime}R_{ik^\prime}|\xi_{il}, \xi_{il^\prime}\right)\\
&=N_icov(\sum_{k=1}^LQ_{ikl}\sum_{j=1}^{N_i}(1-\delta_{ij})
\left(\sum_{m=1}^L\xi_{im}\psi_{m}(t_{ij})+
\epsilon_{ij}\right)\psi_k(t_{ij}), \\
&\quad\quad\quad\sum_{k^\prime=1}^LQ_{ik^\prime l^\prime}\sum_{j^\prime=1}^{N_i}(1-\delta_{ij^\prime})\left(\sum_{m^\prime=1}^L\xi_{im^\prime}\psi_{m^\prime}(t_{ij^\prime})+\epsilon_{ij^\prime}\right)\psi_{k^\prime}(t_{ij^\prime})\\
&\quad\quad\quad |\xi_{il}, \xi_{il^\prime})\\
&=N_icov\left(\sum_{k=1}^LQ_{ikl}\sum_{j=1}^{N_i}(1-\delta_{ij})
\epsilon_{ij}\psi_k(t_{ij}), 
\sum_{k^\prime=1}^LQ_{ik^\prime l^\prime}\sum_{j^\prime=1}^{N_i}(1-\delta_{ij^\prime})\epsilon_{ij^\prime}\psi_{k^\prime}(t_{ij^\prime})\right)\\
&=\sum_{k, k^\prime=1}^LQ_{ikl}Q_{ik^\prime l^\prime}
N_i\sum_{j=1}^{N_i}(1-\delta_{ij})^2\psi_k(t_{ij})\psi_{k^\prime}(t_{ij})\sigma^2\\
&\to v_{ll^\prime}.
\end{align*}

\end{proof}
\begin{remark}
For the special case that $\delta_{ij}=0$, that is, none of the observations
are subjected to detection limit, 
and since as $N_i\to\infty$, $\frac{1}{N_i}\sum_{j=1}^{N_i}\psi_{k}(t_{ij})\psi_{k^\prime}(t_{ij})\to0$ if $k\ne k^\prime$ and $\frac{1}{N_i}\sum_{j=1}^{N_i}\psi_{k}^2(t_{ij})\to1$, 
we have 
$R_{il}=\sum_{j=1}^{N_i}y_{ij}\psi_l(t_{ij})$, 
$S_{il}=0$, 
$\frac{1}{N_i}T_{ikl}\to1$ if $k=l$, 
$\frac{1}{N_i}T_{ikl}\to0$ if $k\ne l$.
Then $E\left[\hat\xi_{il}\right]\to\xi_{il}$, $var\left(\sqrt{N_i}\hat\xi_{il}\right)
\to\sigma^2$
and $cov\left(\sqrt{N_i}\hat\xi_{il}, \sqrt{N_i}\hat\xi_{il^\prime}\right)=0$ if $l\ne l^\prime$.
\end{remark}

\begin{remark}
For the special case that $\delta_{ij}=1$ that is all of the observations
are subject to detection limit,
and since as $N_i\to\infty$, $\frac{1}{N_i}\sum_{j=1}^{N_i}\psi_{k}(t_{ij})\psi_{k^\prime}(t_{ij})\to0$ if $k\ne k^\prime$ and $\frac{1}{N_i}\sum_{j=1}^{N_i}\psi_{k}^2(t_{ij})\to1$, 
we have 
$R_{il}=0$, $S_{il}=(0.502c-0.8194\sigma)\sum_{j=1}^{N_i}\psi_l(t_{ij})$, $\frac{1}{N_i}T_{ikl}\to0.502$ if $k=l$, $\frac{1}{N_i}T_{ikl}\to0$ if $k\ne l$.
Then $E\left[\hat\xi_{il}\right]\to\left(c-\frac{0.8194}{0.502}\sigma\right)\int\psi_l(t)dt$, 
$var\left(\sqrt{N_i}\hat\xi_{il}\right)=0$
and $cov\left(\sqrt{N_i}\hat\xi_{il}, \sqrt{N_i}\hat\xi_{il^\prime}\right)=0$ if $l\ne l^\prime$.
\end{remark}

\begin{remark}
In the general case that $\psi_l(t)$ are unknown and should be estimated say using formula (\ref{eq:adjusted - C - approximation}),
the $\psi_l(t)$ in 
formula (\ref{eq: estimated scores}) will be replaced by $\hat\psi_l(t)$. 
The proof of asymptotic property of the estimated scores is similar to that of the Theorem 1, except a term of $O_p\left(T_{n,N_i}^{-1/2}\right)$, see proof of Theorem 4 in Beran and Liu (2016).
\end{remark}

\section{Simulation study}
In this section, we evaluate the performance of our proposed estimators of $\psi_l(t),\ \xi_{il}, \ l\in\mathbb N$ via extensive simulations. 
We compare the performance with a standard method where the missing observations are replaced with the DL value (Yao et al. 2005), in terms of bias, efficiency, asymptotic behaviour and computation time.

We assume that $\mu(t)=0$ for simplicity and define the true zero mean random function $U(t)$ as follows
\begin{align} \label{eq:simulation}
 U(t)=\xi\psi(t),\ t\in[0,1],   
\end{align}
where $\psi(t)=\sqrt{2}\cos(4\pi t)$, $\xi\sim \mathcal N(0, \lambda)$ and $\lambda=2$.
The observed time points $t_{ij}\sim_{iid} \mathcal U[0, 1]$. 
Additive errors are sampled from $\epsilon_{ij}\sim \mathcal N(0, 1)$.
The scores $\xi_i\sim \mathcal N(0, 2)$.
Then, the data is generated by 
$$Y_{ij}=X_i(t_{ij})+\epsilon_{ij}=\xi_i\psi(t_{ij})+\epsilon_{ij},\ i=1,...,n, \ j=1,...,N_i.$$
Finally, we create missing data by  replacing observations smaller than the DL with $c$ which is the value of the DL. We consider $c \in \{-1,0\}$.

\subsection{Simulation study on eigenfunctions}

We consider two settings, namely a sparse and a dense grid for the observations for each subject $i$. Specifically
\begin{itemize}
\item Sparse setting: $N_i\sim \mathcal U\{3, 4, 5, 6, 7, 8, 9, 10\}$ i.e. $N_i$ are iid from a discrete uniform distribution 
in $\{3, 4, ..., 10\}$.
\item Dense setting: $N_i\sim \mathcal U\{75, 76, ..., 100\}$ i.e. $N_i$ are iid from a discrete uniform distribution 
in $\{75, 76, ..., 100\}$.
\end{itemize}
For both settings we will generate one dataset of $n=100$ subjects.

For the sparse setting, the upper panel of Figure \ref{fig:FigureObserveration_FPCA_2DL} depicts the data  without a DL (left plot) and subject to a DL of $c=0$ (middle plot) and of $c=-1$ (right plot). The proportion of observations subject to DL of $c=0$ and of $c=-1$ is 49.46\% and 25.66\%, respectively. 
For the dense setting, the lower panel of Figure \ref{fig:FigureObserveration_FPCA_2DL} shows the data  without a DL(left plot) and subject to a DL of $c=0$ (middle plot) and of $c=-1$ (right plot). The proportion of observations subject to DL of $c=0$ and of $c=-1$ is 50.06\% and 27.26\% respectively.

\begin{figure}[htp]
\centering
    \includegraphics[width=\textwidth]{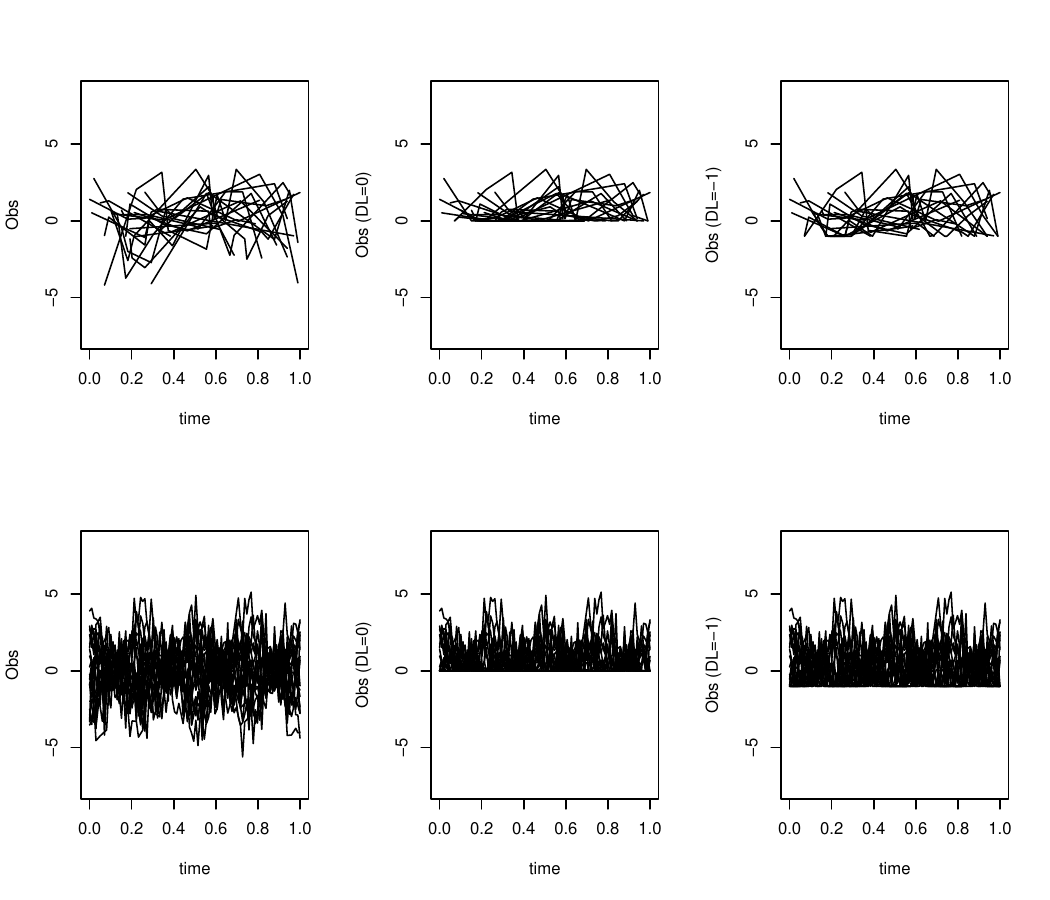}
    \caption{The first 20 of the 100 trajectories in the first replicate of the sparse (upper panel) and dense (lower panel) setting. In each panel, left: data without a DL, middle: data with a DL of 0, right: data with a DL of $-1$.}
    \label{fig:FigureObserveration_FPCA_2DL}
\end{figure}

To estimate the covariance function, we consider the following methods:
\begin{itemize}
\item Local which is based on local constant approximations using the OBS weighting scheme (Liu and Duistermaat 2023).
\item PACE which does not  adjust for the detection limit  (Yao et al., 2005).
\end{itemize}
The covariance functions are estimated on 100 equal-distant time points in $[0,1]\times[0,1]$.
The variance of $\epsilon_{ij}$ is estimated as the mean squared error based on
the least-squared fit using all the data (including the values subject to DL). 
We use the Gaussian kernel for the estimation procedure. 
The bandwidth $h$ is selected based on the integrated squared error (ISE) computed for a dense grid of values, see details in Liu and Duistermaat (2023).
Once the covariance function is estimated, the eigenfunctions can be estimated via formula (\ref{eq:Mercer Theorem - theoretical}).

Figure \ref{fig:FigureEstimatedEigenfunction} shows the true and estimated eigenfunction using different methods under different sampling settings and different detection limit scenarios.
Apparently, for all sampling settings and detection limit scenarios, the estimated eigenfunction (red curves) using the proposed method is closer 
to the true eigenfunction (black curves) than that by using PACE method under the default settings (blue curves).
Moreover, the proposed method is much faster than PACE, for example, for the dense sampling setting and DL$=0$ scenario, the computation time for the proposed method is 0.046 seconds and that for the PACE method is 791 seconds.
\begin{figure}[htp]
\centering
    \includegraphics[width=\textwidth]{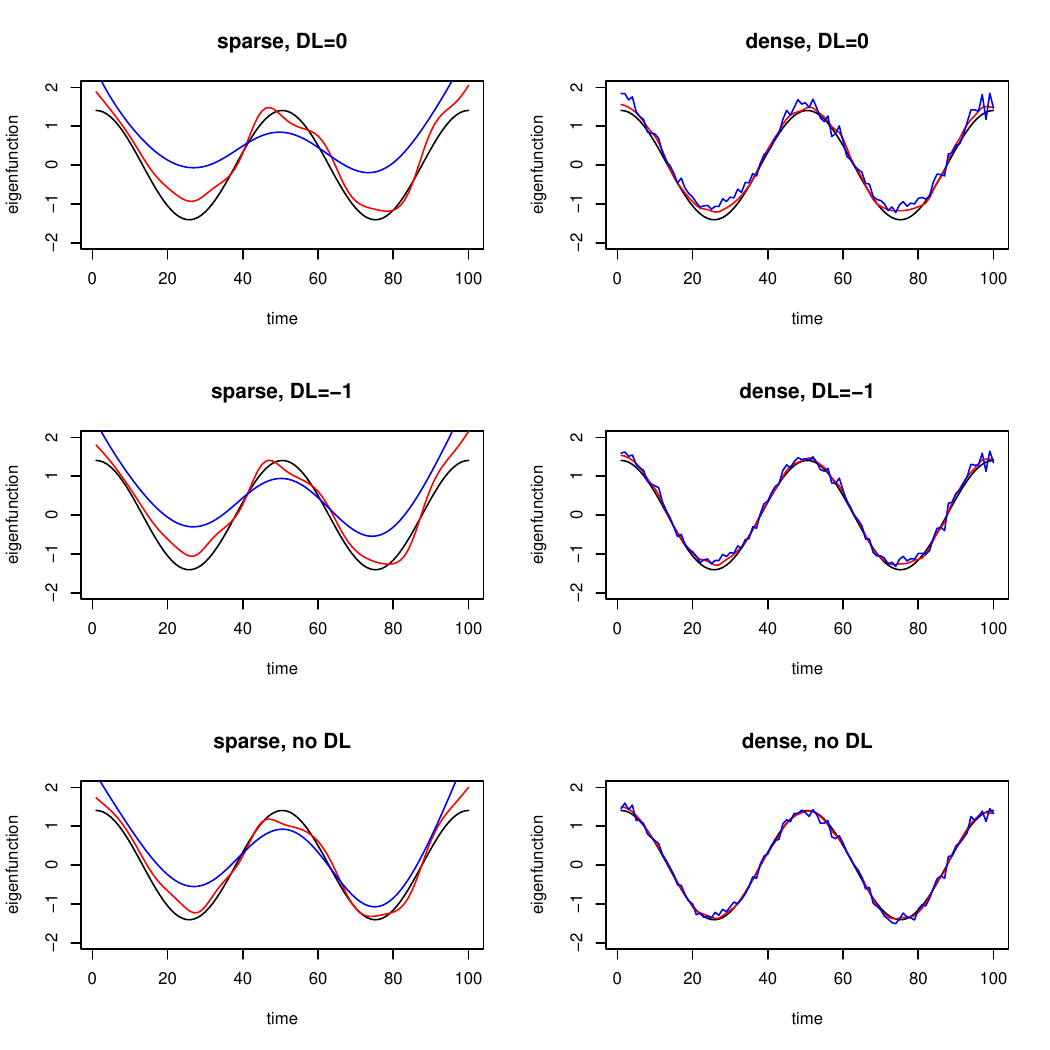}
    \caption{Estimated eigenfunctions. Upper left is the sparse setting and DL$=0$ scenario where black curve is the true eigenfunction, red curve is the corresponding estimated eigenfunction using the proposed method, and blue curve is the corresponding estimated eigenfunciton using the PACE method. Upper right is the dense setting and DL$=0$ scenario. Middle left is the sparse setting and DL$=-1$ scenario. Middle right is the dense setting and DL$=-1$ scenario. Lower left is the sparse setting without DL scenario. Lower right is the dense setting without DL scenario.}
    \label{fig:FigureEstimatedEigenfunction}
\end{figure}
Table 1 gives the integrated squared error (ISE*1000) of estimated eigenfunction using different methods under different sampling settings for the two considered detection limits.
For PACE, the ISE of estimated eigenfunction is much larger than that of the local method proposed here for both dense and sparse settings and different detection limit scenarios, as could be expected since it does not take into account of the DL.
Moreover, the dense setting has smaller ISE compared to the sparse setting which is also expected as more data are used.
Comparing DL$=-1$ with DL$=0$, ISE is smaller for DL$=-1$. 
This can be explained by the fact that there is more information for DL$=-1$ scenario.

\begin{table}
\begin{center}
\begin{tabular}{ |c|cc| }  
 \hline
 & Local & PACE   \\
 \hline
sparse, DL$=0$  & 88 & 763  \\
 \hline
dense, DL$=0$  & 14 & 54 \\
 \hline
sparse, DL$=-1$  & 71 & 516  \\
 \hline
dense, DL$=-1$  & 6 & 22 \\
 \hline
sparse, no DL  & 47 & 281  \\
 \hline
dense, no DL  & 1 & 10 \\
 \hline
\end{tabular}
\label{tab:eigenfunction}
\caption{ISE*1000 of the estimated eigenfunction for the sparse and dense setting with DL$=0$ and DL$=-1$.
}
\end{center}
\end{table}

\subsection{Simulation study on scores}
To evaluate the performance of our method in estimating the scores, we generated 100 replicates each with 100 subjects. Further to verify Theorem 2, we consider increasing number of sampling points $N_i$. 
We consider a sparse and a dense grid for the observations for each subject $i$. Specifically
\begin{itemize}
\item Sparse setting: $N_i\sim \mathcal U\{3, 4, 5, 6, 7, 8, 9, 10\}*\frac{M}{100}$ i.e. $N_i$ are iid from a discrete uniform distribution 
in $\{3, 4, ..., 10\}*\frac{M}{100}$.
\item Dense setting: $N_i\sim \mathcal U\left\{\frac{3}{4}*M, \frac{3}{4}*M+1,..., M\right\}$ i.e. $N_i$ are iid from a discrete uniform distribution 
in $\left\{\frac{3}{4}*M, \frac{3}{4}*M+1,..., M\right\}$.
\end{itemize}
We will consider $M=100, 200, 500, 1000$.

In order to evaluate the accuracy of the proposed estimated scores,
we assume that the functional principal components $\psi(t)$ are known in (\ref{eq:simulation}).
Define the mean squared error ($MSE^*$) as 
$$MSE^*=\frac{1}{n}\sum_{i=1}^n(\hat\xi_{i}-\xi_i)^2$$
and another mean squared error ($MSE^{**}$) as 
$$MSE^{**}=\frac{1}{n}\sum_{i=1}^n(\hat\xi_{i}-A_i)^2,$$
where $A_i$ is defined in formula (\ref{eq:A}).
Notice that $MSE^{**}$ is for evaluation of the asymptotic result of Theorem 2.

The results are given in Table \ref{tab:scores}. 
The mean of the estimated scores for all 100 subjects and 100 replicates 
is given in the second column in Table \ref{tab:scores}. It appears  
that the values of the mean for the dense and sparse setting with the two DL values are close to the true mean  of zero.
The mean across the replicates of the estimates of the variance of the estimated scores is given in the third
column in Table \ref{tab:scores}. Here, the true value is equal to two. It appears that the variance
in the DL=0 scenario is less accurate than that in DL=$-1$ scenario, as more
information is missing in DL=0 scenario than in DL=$-1$ scenario.
The averaged $MSE^*$ and averaged $MSE^{**}$ are given in the fourth 
and fifth column respectively in Table \ref{tab:scores}.
It appears that when the number of observations increases, the values of these MSE's are decreasing for each setting and each DL scenario which is excepted under Theorem 2.
The $MSE^*$ and $MSE^{**}$ are smallest  
in the dense setting with DL=$-1$ which is the scenario where most information is available.
{\color{black} The average $MSE^{**}$ values are smaller than the average $MSE^*$ values across all settings and scenarios, except in the dense setting with a detection limit (DL) of $-1$. This is consistent with the expectation from Theorem 2, as $MSE^{**}$ quantifies the mean squared error between the estimated score and its asymptotic expectation. Notably, this expectation does not coincide with the true score when a detection limit is present.}
The last two columns in Table \ref{tab:scores} shows the mean of the estimates of the variance (variance$^1$) and of $MSE^*$ ($MSE^{*1}$) of the estimated scores using the traditional method, i.e. the numerical integral of observation and the estimated eigenfunction ($\hat\xi_i=\frac{1}{N_i}\sum_{j=1}^{N_i} Y_{ij}\hat\psi(t_{ij})$). It appears that when using the traditional method the variance is underestimated and the estimated scores are much less accurate than when using the  method in this paper. 

\begin{table}
    \centering
    \begin{tabular}{|c|c|c|c|c|c|c|}
 \hline
 \multicolumn{7}{|c|}{sparse (DL$=0$)}\\ \hline
 &  mean & variance &$MSE^*$ &$MSE^{**}$ & variance$^1$ & $MSE^{*1}$ \\ \hline
M=100 &  0.000 &1.839 &31.094 &15.989 & 0.771 & 75.890\\ \hline
M=200 & -0.009 &1.746 &16.092  &8.099 & 0.649 & 62.750\\ \hline
M=500 & -0.016 &1.664  &8.448  &3.229 & 0.564 & 54.837\\ \hline
M=1000& -0.017 &1.629  &6.167  &1.683 & 0.533 & 53.230\\ \hline
 \multicolumn{7}{|c|}{sparse (DL$=-1$)}\\ \hline
 &  mean & variance &$MSE^*$ &$MSE^{**}$ & variance$^1$ & $MSE^{*1}$ \\ \hline
M=100 &  -0.001 &2.119 &26.354 &18.233 & 1.242 & 43.464\\ \hline
M=200 &  -0.011 &2.015 &12.185  &9.703 & 1.130 & 30.925\\ \hline
M=500 &  -0.019 &1.916  &5.256  &4.726 & 1.050 & 23.460\\ \hline
M=1000&  -0.018 &1.875  &3.125  &3.231 & 1.011 & 21.874\\ \hline
 \multicolumn{7}{|c|}{dense (DL$=0$)}\\ \hline
 &  mean & variance &$MSE^*$ &$MSE^{**}$ & variance$^1$ & $MSE^{*1}$ \\ \hline
M=100 &   0.007 &1.609 &4.974 &0.893 & 0.516 & 49.918\\ \hline
M=200 &  -0.004 &1.625 &4.313 &0.452 & 0.514 & 50.525\\ \hline
M=500 &  -0.017 &1.618 &3.714 &0.180 & 0.507 & 50.241\\ \hline
M=1000&  -0.018 &1.610 &3.779 &0.088 & 0.506 & 50.593\\ \hline
 \multicolumn{7}{|c|}{dense (DL$=-1$)}\\ \hline
 &  mean & variance &$MSE^*$ &$MSE^{**}$ & variance$^1$ & $MSE^{*1}$ \\ \hline
M=100 &   0.004 &1.851 &2.232 &2.465 & 1.002 & 19.421\\ \hline
M=200 &  -0.006 &1.869 &1.620 &2.023 & 1.003 & 19.426\\ \hline
M=500 &  -0.019 &1.864 &1.110 &1.716 & 0.994 & 19.019\\ \hline
M=1000&  -0.020 &1.855 &1.083 &1.607 & 0.986 & 19.495\\ \hline
    \end{tabular}
    \caption{Asymptotic results for estimated scores for different settings under different DL.
    Second column: mean of the estimated scores from proposed method for all 100 subjects and 100 replicates.
    Third column: averaged variance of the estimated scores from proposed method.
    Fourth column: averaged $MSE^*$ from proposed method.
    Fifth column: averaged $MSE^{**}$ from proposed method.
    Sixth column (variance$^1$): averaged variance of the estimated scores from traditional method.
    Last column ($MSE^{*1}$): averaged $MSE^*$ of the estimated scores from traditional method.}
    \label{tab:scores}
\end{table}

\section{Data application} 
We illustrate our method by applying it to data on a biomarker from a longitudinal biomarker study involving scleroderma patients. Scleroderma is a complex and heterogeneous disease with variability in severity progression among patients. The study includes 217 patients who attended hospital visits between 2010 and 2015. Typically, scleroderma patients visit the hospital every six months to monitor disease progression. However, missed appointments and unrecorded data led to a sparse and unbalanced dataset. The data were collected under an ethically approved observational study protocol (HRA number 15/NE/0211).

For the biomarker aldose reductase (AR) with 7.8\% missing data due to DL, Liu and Houwing-Duistermaat (2022, 2023) estimated the mean and covariance functions using the local constant approximation method under the OBS scheme. Here, data on  90 patients with a total of 268 observations were used. 
Observations from time points lacking outcome or biomarker values, as well as certain outliers (e.g., instances where AR exceeded three times the standard deviation at a given time point, and patients with only a single observation) were excluded. 
The observed longitudinal profiles minus the mean function are displayed in Figure \ref{fig:FigureARminusMean_cov}. Note that the number of observations after 30 months is small hence estimates after 30 months are not reliable (see also Liu and Houwing-Duistermaat, 2022).
The estimate of the covariance function
is  shown in the Figure \ref{fig:FigureAREstimateSparse_cov}.

\begin{figure}[htp]
    \centering
    \includegraphics[width=\textwidth]{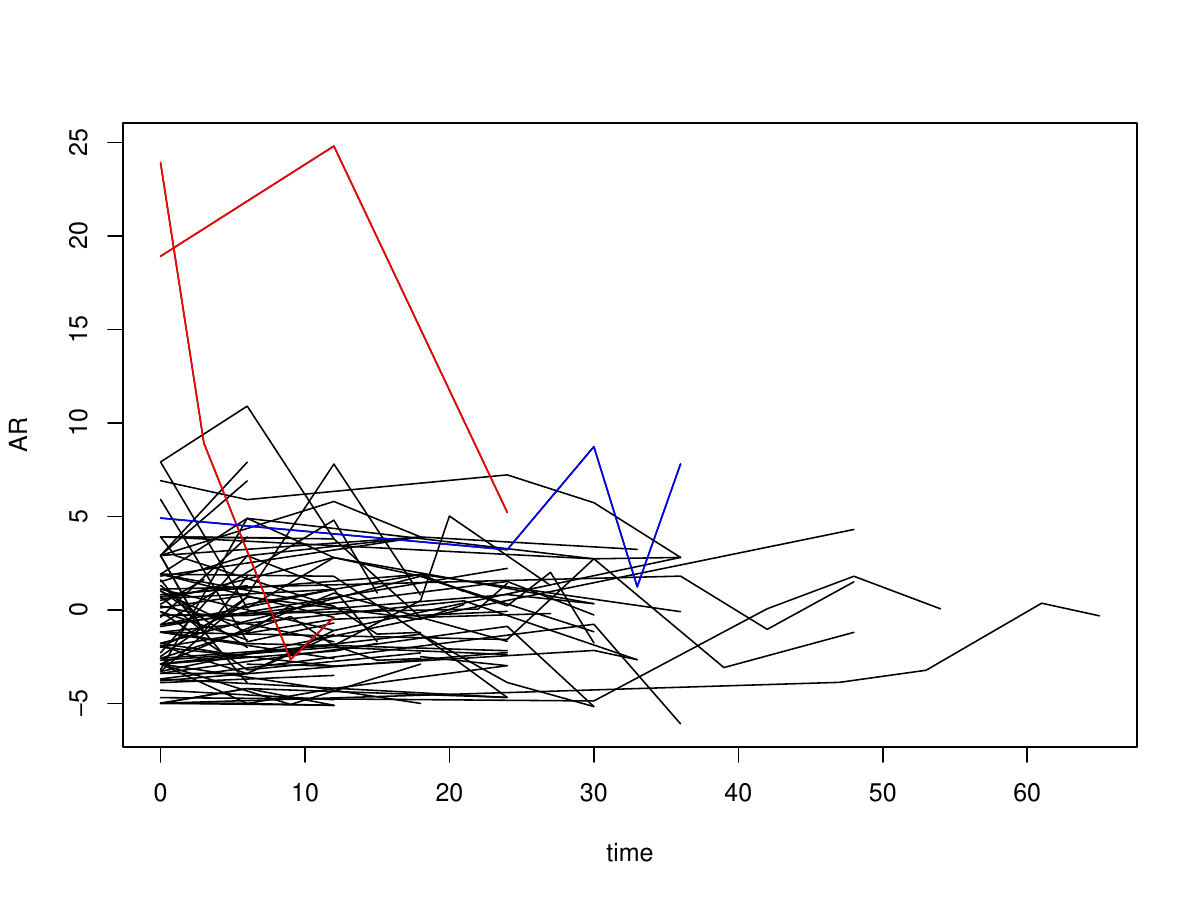}
    \caption{The AR observations with estimated mean being subtracted. 
    The red and blue curves correspond to the outliers in Figure \ref{fig: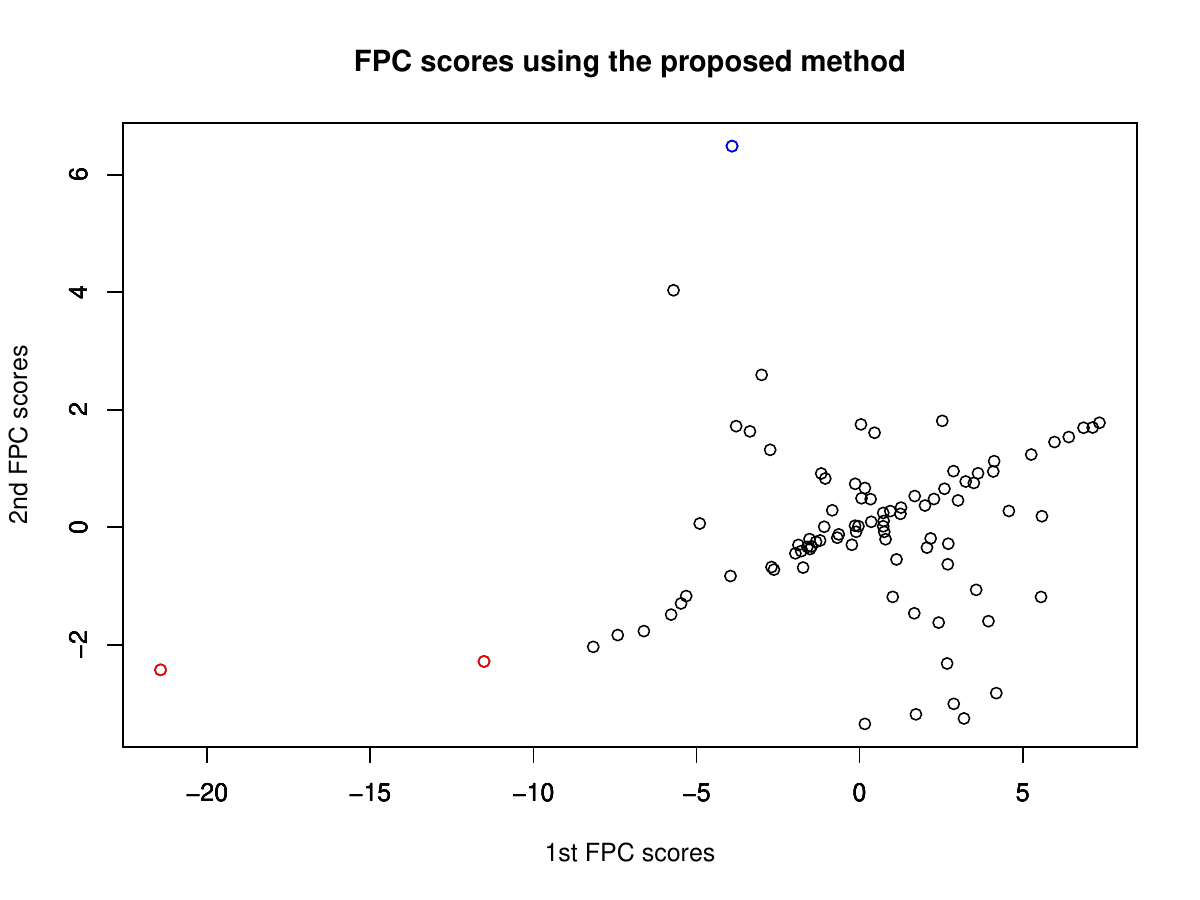}.}
    \label{fig:FigureARminusMean_cov}
\end{figure}

\begin{figure}[htp]
    \centering
    \includegraphics[width=\textwidth]{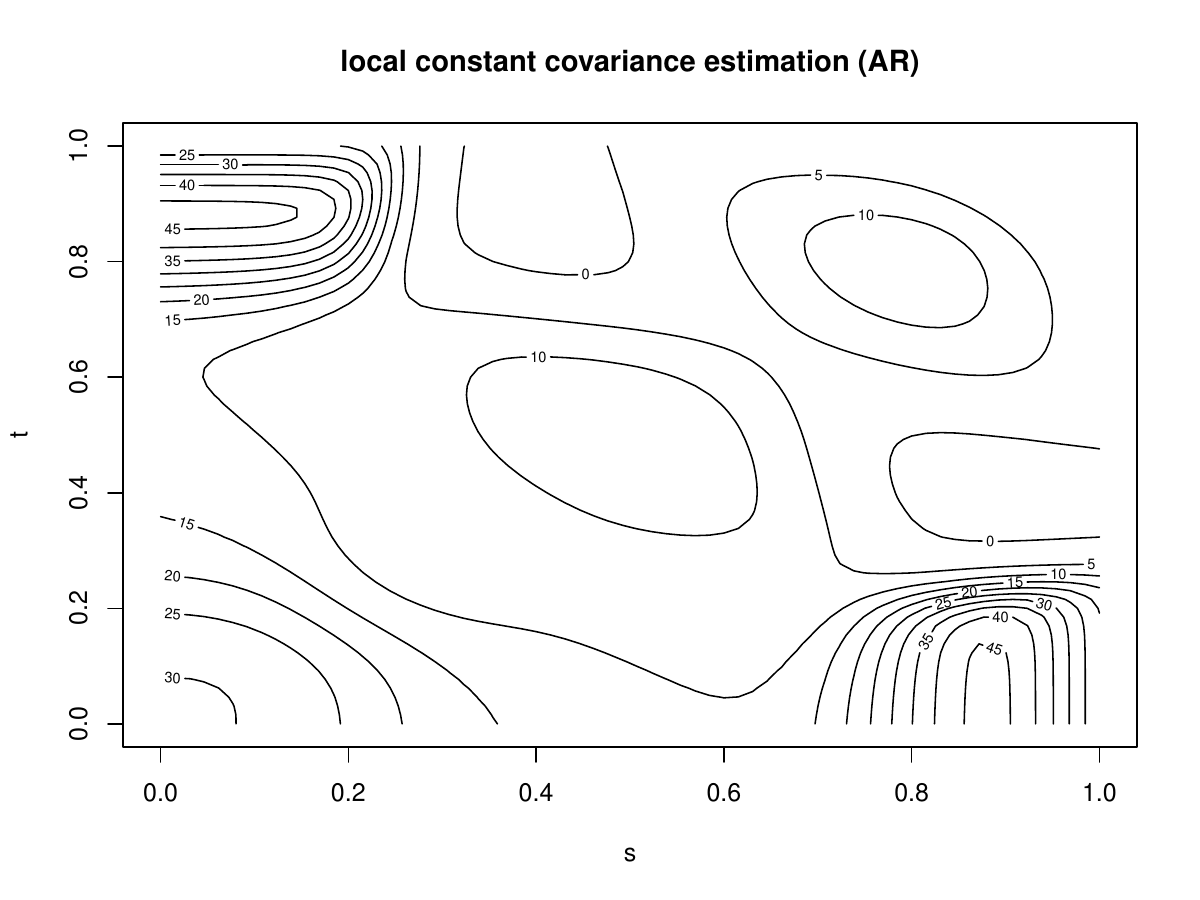}
    \caption{The covariance estimation for AR by using local constant estimation method.}
    \label{fig:FigureAREstimateSparse_cov}
\end{figure}

Using the proposed methods, we aim to estimate the FPCs and scores. We selected two FPCs which together explained 90\% of the variation, where the first and second FPCs explain 66.0\% and 25.4\% of the variation, respectively.
in Figure \ref{fig: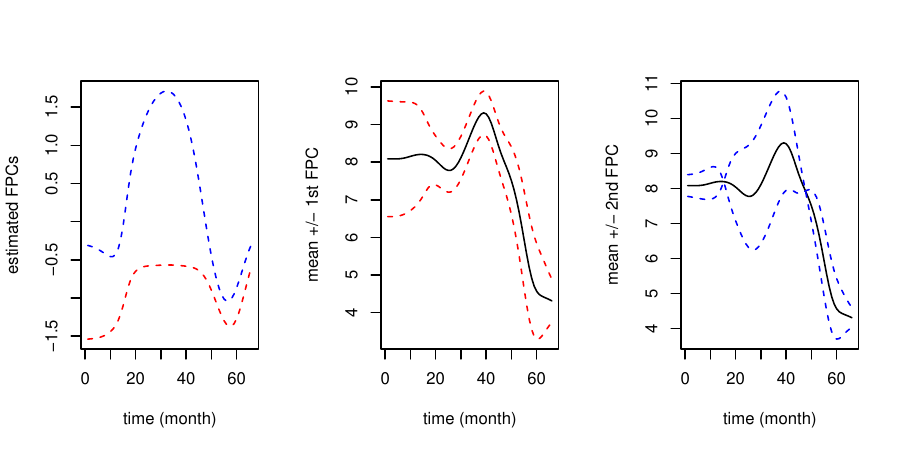}, the two estimated FPCs are given (left plot).
Also the estimated mean function plus and minus estimated FPCs are given (middle and right plots).
In the left plot of Figure \ref{fig:FigureARPCAEst.pdf}, the red and blue curves are the estimated first and second FPCs respectively.
In the middle and right plots, the black curve is the estimated mean function and the red curves are the estimated mean function plus and minus the first and second FPC respectively.
It appears that the first FPC captures the variation mode in the first 15 months, while the second FPC captures the variation mode in the second 15 months (months 15-30).

\begin{figure}[htp]
    \centering
    \includegraphics[width=\textwidth]{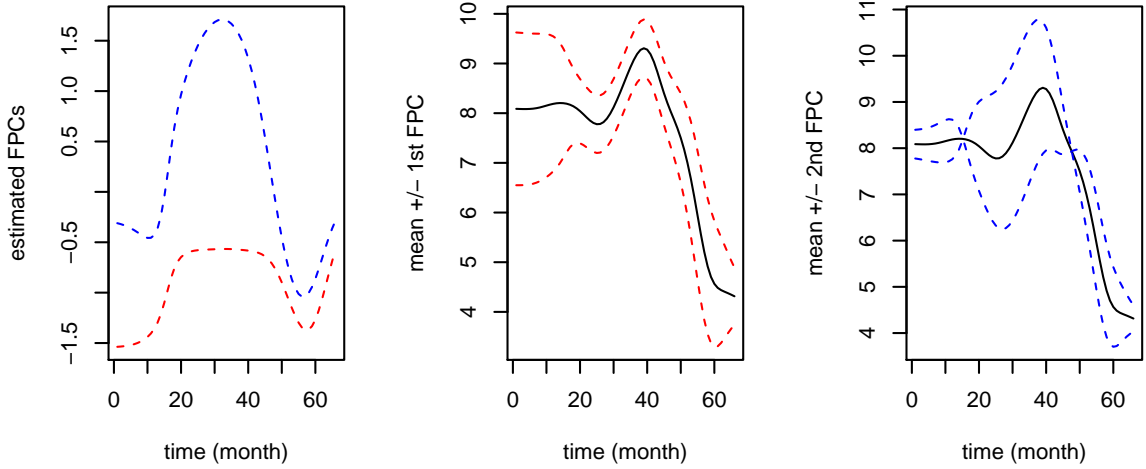}
    \caption{The estimated FPCs for AR by using the proposed method. 
    The red curve in the left plot is the estimated first FPC and the blue is the estimated second FPC.
    In the middle plot, the black curve is the estimated mean function, the red curves are the estimated mean function plus and minus the first FPC.
    In the right plot, the black curve is the estimated mean function, the red curves are the estimated mean function plus and minus the second FPC.}
    \label{fig:FigureARPCAEst.pdf}
\end{figure}

The estimated scores are shown in Figure \ref{fig:FigureAFEstimateScores.pdf}.
We observe two outliers from the first component which are colored red (subject 7, 67), and one outlier from the second component which is colored blue (subject 4).
The curves of these three individuals are also colored in Figure \ref{fig:FigureARminusMean_cov}.
The subjects colored red have large deviations from the mean in the first 15 months while the subject colored blue has a large deviation of the mean in the period 15 to 30 months.
\begin{figure}[htp]
    \centering
    \includegraphics[width=\textwidth]{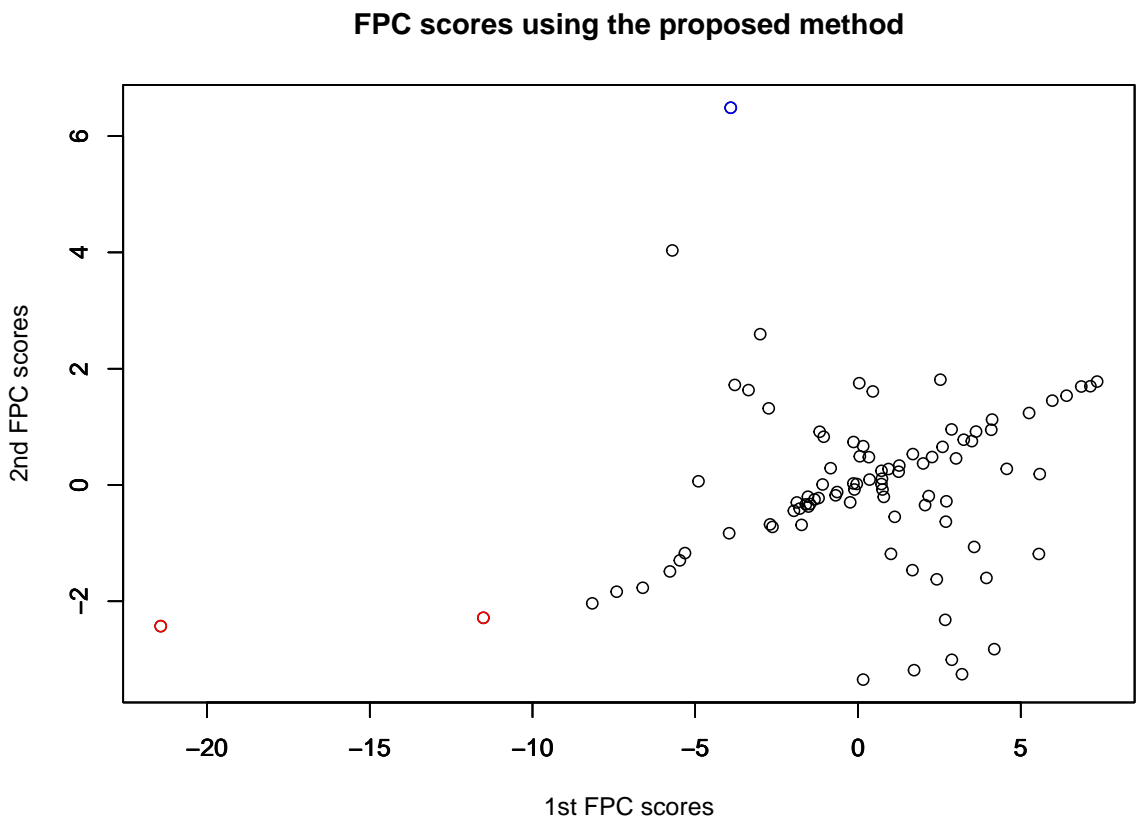}
    \caption{The estimated FPC scores for AR by using the proposed method. }
    \label{fig:FigureAFEstimateScores.pdf}
\end{figure}

To evaluate the performance of the reconstruction of the AR biomarkers based on the first two FPCs, the mean integrated square of errors (MISE)
$$IMSE=\frac{1}{n}\sum_{i=1}^n\int (X_i(t)-\hat X_i(t))^2dt,$$
were calculated. The MISE appears to be 11.85, while the MISE using the naive method ignoring DL is 13.71.

\section{Discussion}

We have proposed novel estimators for the functional principal components and scores for sparse and dense temporal data subject to a DL. 
The eigenfunctions are estimated using the estimated covariance function based on local smoothing using kernel functions (Liu and Houwing-Duistermaat, 2023). 
The scores are estimated using maximum likelihood estimation for each subject. 
We derived the asymptotic properties of the estimator for the scores. 
We investigated the small-sample properties of our estimators via simulations.
We compared our method to the one that ignores the presence of a DL in the data sample, i.e. replaces the missing value with the value of the DL. 
We showed that our methods performed better; the ISE of the estimated eigenfunctions and the MSE of the estimated scores are much smaller than those obtained using the standard method. Further we showed that when the sample size increases the MSE's of the estimated scores decrease.

We illustrated the method using data from a biomarker study. 
The first two components appeared to represent 90\% of the variation. 
The first component appeared to capture the variance in the first 15 months, while the second captures the variance of the curves in months 15 to 30. 
Investigation of the obtained scores identified three subjects with outlying trajectories, two in the first period of 15 months and one in the second period of 15 months.

The estimated eigenfunctions and scores can be used to assess the relationship between a functional predictor and an outcome following the work of Muller and Stadmuller (2005). 
Recently, Murphy et al (2024) proposed a generalised linear functional model with a functional predictor subject to detection limits. 
They estimate the mean and covariance functions using a full likelihood approach without the constant approximation. 
In our paper (Liu and Houwing-Duistermaat, 2023), we showed that our fast approximations in the likelihood function to estimate the covariance function perform well especially for dense observations and when there are not too many missing values (less than 30\%). 
In this paper we showed that the eigenfunctions estimated from this covariance function are accurate. 
Murphy et al (2024) did not evaluate the performance of their method in estimating the scores and eigenfunctions and just plugged in these estimates for the functional covariate in the linear predictor for an outcome variable. 
Further, there appeared to be not much difference in the prediction error of their estimator which takes the DL into account and the one that ignores the DL. 
Thus, it can be expected that plugging in the our estimated principal functional components will perform well for many situations. 

The scores are random variables, however, in this paper we did not estimate the variance of the scores and only the mean. 
In fact, since we did not take into account the randomness, we assume that the scores are fixed. 
For regression problems, taking into account the randomness might provide better estimates of the variance of the parameter functions. 
Further, for large $n$ plugging in the estimates of the eigenfunctions and scores might work well, but for smaller $n$ the measurement error might need to be taken into account. 
To take the variance of the scores and the measurement error into account, a joint modeling approach might be adopted. 
More research is needed here.

When a large number of components are used, the inverse of the $T$ matrix providing the $Q$ matrix in equation (\ref{eq: estimated scores}) might not be precise. 
In this case, a sequential approach might be adopted. 

Finally, in many studies with temporal data, we have to deal with data missing at random, where subjects drop out of the study. 
Indeed, this might also be the case in our biomarker study, where we have only a few patients with a follow-up time of more than 30 months. 
Such missing data mechanisms need other approaches for example by using random effects. 
This is a topic of future research.  

\section{Acknowledgement}
The authors would like to acknowledge the contribution of the COST Action CA21169, supported by COST (European Cooperation in Science and Technology)
and the support of the London Mathematical Society (EN-2324-07).

\end{document}